\newif\ifnotes
\newtheorem{corollary}{Corollary}
\newcommand{\solname}{\textsc{Nectar}\xspace}
\newcommand{\mesure}{data sent\xspace}
\newcommand{\editft}[1]{#1}
\newcommand{\editms}[1]{#1}
\newcommand{\mschange}[1]{\ifnotes{\color{black}#1}\else#1\fi}
\newcommand{\ftchange}[1]{\ifnotes{\color{black}#1}\else#1\fi}
\newcommand{\notPartitionable}{\textsc{Not\_Partitionable}\xspace}
\newcommand{\partitionable}{\textsc{Partitionable}\xspace}
\newcommand{\confirmed}{\ensuremath{\mathsf{confirmed}}\xspace}
\newcommand{\true}{\ensuremath{\mathsf{True}}\xspace}
\newcommand{\false}{\ensuremath{\mathsf{False}}\xspace}
\newcommand{\send}{\ensuremath{\mathsf{send}}\xspace}
\newcommand{\sendto}{\ensuremath{\mathsf{to}}\xspace}
\newcommand{\proofWord}{\ensuremath{\mathit{proof}}\xspace}
\newcommand{\neighbor}{\ensuremath{\mathit{neighbor}}\xspace}
\newcommand{\lengthSign}{\ensuremath{\mathsf{lengthSign}}\xspace}
\newcommand{\reachable}{\ensuremath{\mathsf{DetectReachableNode}}\xspace}
\newcommand{\connectivity}{\ensuremath{\mathsf{VertexConnectivity}}\xspace}
\newcommand{\decide}{\ensuremath{\mathsf{decide}}\xspace}
\newtheorem{definition}{Definition}
\newtheorem{theorem}{Theorem}
\newtheorem{lemma}{Lemma}
\crefname{line}{line}{lines}
\def\BibTeX{{\rm B\kern-.05em{\sc i\kern-.025em b}\kern-.08em
    T\kern-.1667em\lower.7ex\hbox{E}\kern-.125emX}}
\begin{document}

\title{Partition Detection in Byzantine Networks
}


\author{\IEEEauthorblockN{Yérom-David Bromberg\textsuperscript{\rm 1}, Jérémie Decouchant\textsuperscript{\rm 2}, Manon Sourisseau\textsuperscript{\rm 1}, François Taïani\textsuperscript{\rm 1}}
\IEEEauthorblockA{\textsuperscript{\rm 1}\textit{Univ Rennes, Inria, CNRS, IRISA}, France, \textsuperscript{\rm 2}\textit{TU Delft}, The Netherlands\\
david.bromberg@irisa.fr, j.decouchant@tudelft.nl, manon.sourisseau@inria.fr, francois.taiani@inria.fr }
}



\maketitle

\thispagestyle{plain} 
\pagestyle{plain}

\begin{abstract}
Detecting and handling network partitions is a fundamental requirement of distributed systems. Although existing partition detection methods in arbitrary graphs tolerate unreliable networks, they either assume that all nodes are correct or that a limited number of nodes might crash. In particular, Byzantine behaviors are out of the scope of these algorithms despite Byzantine fault tolerance being an active research topic for important problems such as consensus. Moreover, Byzantine-tolerant protocols, such as broadcast or consensus, always rely on the assumption of connected networks. This paper addresses the problem of detecting partition in Byzantine networks (without connectivity  assumption).
We present a novel algorithm, which we call \solname, that safely detects partitioned and possibly partitionable networks and prove its correctness. \solname allows all correct nodes to detect whether a network could suffer from Byzantine nodes. We evaluate \solname's performance and compare it to two existing baselines using up to 100 nodes running real code, \editms{on various realistic topologies.} 
Our results confirm that \solname maintains a 100\% accuracy while the accuracy of the various existing baselines decreases by at least 40\% as soon as one participant is Byzantine. Although \solname's \mschange{network cost} increases with the number of nodes and decreases with the network's diameter, it does not go above around 500KB in the worst cases. 
\end{abstract}

\begin{IEEEkeywords}
Byzantine Fault, Network Partition, Digital Signatures, Distributed Algorithms
\end{IEEEkeywords}


\section{Introduction}
In typical distributed systems, 
messages transit on communication channels 
that form an incomplete network. Two distant nodes only need to be connected through a path of nodes and channels to communicate. In such systems, a partition occurs when the network becomes separated into two (or more) parts that cannot communicate. Partitions can occur for various reasons, such as hardware failures, network congestion, or software issues. They can inherently lead to data loss and data inconsistency, as, for instance, commonly illustrated in the distributed database field and popularized through the P of the CAP theorem~\cite{brewer10}.  

The ability to detect partitions is a key requirement for a wide range of distributed algorithms that (i) assume a connected network, (ii) need to detect specific failures (node crashes), (iii) require information about the topology
(e.g., in MANETs~\cite{ruiz2015survey}), or (iv) auto-configure systems (e.g., in IoT networks~\cite{akkaya2005survey}). 
In such cases, detecting and resolving partitions quickly is essential to maintain network reliability and availability, identify and resolve underlying network issues, improve overall network performance and reduce downtime. 

The partition detection problem has been well studied over the last decade~\cite{ritter_partition_2004, martin_network_2020, bouget_mind_2018, renesse_gossip-style_1998, conan_failure_2008}. However, while some of the existing solutions tolerate crash faults, they break down rapidly when faced with arbitrary, or even malicious behaviors. Such behaviors are often modeled as \emph{Byzantine faults}~\cite{lamport_byzantine_1982}. Byzantine nodes can behave arbitrarily and, therefore, can act in the worst possible way for the correctness of the system. 

Although Byzantine faults have been widely studied in a broad range of contexts~\cite{lamport_byzantine_1982, bracha_asynchronous_1987, dolev_unanimity_1981,bonomi_practical_2021,raynal_fault-tolerant_2018, guerraoui_dynamic_2020,auvolat_byzantine-tolerant_2021, zhang_reaching_2022,augustine2022byzantine}, 
almost all existing works assume a connected network.\footnote{The work of Augustine, Molla, Pandurangan, and Vasudev~\cite{augustine2022byzantine} is the only exception we know of. It is, however, limited to \emph{congested cliques} (a type of fully-connected communication networks).}
In particular, to the best of our knowledge, no work has studied the problem of detecting partitions in \emph{arbitrary} network topologies in the presence of Byzantine nodes.

Detecting a network partition under a Byzantine fault model is however far from trivial. From an operational point of view, correct nodes typically need to decide whether 
their communication with other correct nodes is guaranteed despite Byzantine nodes.
Although simple to state, such a property can be difficult to assess in a distributed setting, in particular if the graph of communication is not immediately known to nodes. This is because the position of Byzantine nodes is not known to correct nodes, and because Byzantine nodes may behave correctly during the partition detection algorithm (thus hiding the danger they represent), and only disrupt communication at some critical later point if they happen to be located at key positions in the network.


To apprehend this difficulty, we introduce the operational notion of \textit{t-Byzantine partitionability} and link it to the vertex-connectivity of the underlying communication graph. We then formally specify the problem of \emph{Byzantine-resilient network partition detection}, and introduce \solname, the first algorithm to solve this problem in a synchronous communication model with signatures, along with its proof of correctness.
Contrary to earlier Byzantine algorithms on arbitrary graphs~\cite{dolev_unanimity_1981,dolev_authenticated_1983}, \solname's does not require any minimal connectivity value and conserves its safety and liveness properties on any graph. \\



{\textbf{Contributions.}} We make the following contributions:

\begin{itemize}
    \item We formally define a new problem, \emph{Byzantine-resilient network partition detection}, and present \solname, a distributed algorithm that solves this problem on any arbitrary graph in the presence of Byzantine nodes.
    \item We formally prove that \solname solves Byzantine-resilient network partition detection .
    \item We have implemented a prototype of \solname in C++ on top of a real network stack on real machines.
    \item We thoroughly evaluate our approach \textit{via} an in-depth experimental evaluation, comparing both network costs and Byzantine reliability to that of two baselines.
    \item We demonstrate that state-of-the-art algorithms lose around 40\% of accuracy in the presence of a single Byzantine node, while our solution remains correct, even in the presence of multiple Byzantine faults. This robustness comes at a reasonable \mschange{network cost}, which does not exceed 500\,KB per node, in the worst case, for a 100-node system. 
\end{itemize}

The rest of this paper is organized as follows.
Section~\ref{sec:system_model} describes our system model. 
Section~\ref{sec:partitions} recalls the definition of a partitioned network, defines $t$-Byzantine partitionable networks, and Byzantine-resilient partition detection algorithms.
Section~\ref{sec:algo} presents \solname, our network partition detection algorithm, and proves its properties. 
Section~\ref{sec:xp_eval} evaluates \solname's performance and compares it to two non-Byzantine-resilient baselines.
Section~\ref{sec:sota} discusses the related work.
Finally, Section~\ref{sec:conclusion} concludes this paper. 

\section{System Model}        
\label{sec:system_model}

We consider a set $\Pi = \{p_1, p_2, ..., p_n\}$ of $n$ processes, each identified by a unique ID. 
All processes know the total number of processes, $\mathbf{card}(\Pi)=n$, and their ID. 
Processes are interconnected by a network represented by a static undirected graph $G = (V, E)$ where $V = \Pi$ and where $E$ contains the communication channels between pairs of nodes. We use $k$ to denote the vertex connectivity of $G$. For simplicity, we assimilate each node of $G$ with the process it hosts, and will use the words `\emph{process}' and `\emph{node}' interchangeably.

Two nodes can directly communicate through a communication channel if and only if they are connected by an edge in $G$. Otherwise, they must rely on other nodes to relay their messages. We assume that communication channels are reliable, and
 that the network is synchronous. In other words, there is a bound $\Delta T$ so that messages sent at a time $T_0$ are always received before time $T_1 = T_0 + \Delta T$. We assume that the local processing time of nodes is negligible compared to communication delays.

Nodes can sign and authenticate messages using an asymmetric digital signature scheme. 
We note $\sigma_i(msg)$ a message $msg$ signed by node $i$. Our protocol leverages chained signatures. For example, $\sigma_j(\sigma_i(msg))$ is a chain signature of message $msg$ that allows a node to extract and verify $\sigma_i(msg)$ and $\sigma_j(\sigma_i(msg))$. 

The \textit{neighborhood} of a node $i$---noted  $\Gamma(i)$---is the set of nodes with which it can directly communicate in $G$, i.e. $\Gamma(i)=\{j\in V\mid (i,j) \in E\}$. When $j\in\Gamma(i)$, we say that $j$ is a \textit{neighbor} of $i$.


We assume that the system contains up to $t$ Byzantine nodes. Byzantine nodes may deviate arbitrarily from their specified protocol, e.g., they may drop, modify, or inject messages at any time. \editms{Following the standard Byzantine fault model~\cite{lamport_byzantine_1982, dolev_unanimity_1981, dolev_authenticated_1983}, Byzantine nodes may not, however, violate network assumptions, such as synchrony (belated messages from Byzantine nodes are just ignored by correct nodes) or reliable channels. In particular, Byzantine nodes cannot prevent two correct neighbors from communicating with each other. 
Byzantine nodes cannot forge signatures, ensuring the authenticity and integrity of messages forwarded by correct nodes.
They cannot spawn new nodes or generate new identities, eliminating Sybil attacks \cite{levine_survey_2006}}.

\editft{Nodes do not know $G$, the underlying network's topology, but know their individual neighborhood $\Gamma(i)$ (either because it was statically configured at set-up, or because it is provided by the underlying network stack, which we assume immune to Byzantine interference).
Each node $p_i$ has further access to a cryptographic \emph{proof of neighborhood} (noted $\proofWord_{p_i, p_j}$) for each of its neighbors $p_j\in\Gamma(i)$. Byzantine nodes cannot forge $\proofWord_{p_i, p_j}$ if it involves at least one correct node. Byzantine nodes may however forge proofs of neighborhood between Byzantine processes.}

\section{Byzantine Partition Detection}
\label{sec:partitions}

This section introduces the notion of $t$-Byzantine partitionable network, and formally define the problem of Byzantine-resilient network partition detection. 
%

\subsection{Network partition}

Informally, a network partition corresponds to the impossibility for pairs of nodes in a network to communicate, even if they rely on intermediary nodes to relay their messages.
More formally, this can be defined as follows: 
\begin{definition}[Network partition] \label{def:partition}
    A communication network $G = (V,E)$ is partitioned if there is a partition $P=\{V_1, \cdots, V_k\}$ of $V$ in $k \ge 2$ subsets such that $\forall (u, v) \in V_i \times V_{j \neq i},\ (u, v) \notin E$. 
\end{definition}

Unfortunately, whether Def.~\ref{def:partition} is satisfied does not characterize the impossibility for pairs of correct nodes to communicate reliably in the presence of Byzantine processes since Byzantine nodes might be able to prevent correct nodes from exchanging messages in a non-partitioned graph. We introduce the notion of $t$-Byzantine partitionability to capture this notion.  

\subsection{t-Byzantine Partitionability}


We say that a graph is \emph{$t$-Byzantine partitionable} if it contains two correct nodes that might be unable to exchange messages if $t$ other nodes are Byzantine. More formally, $t$-Byzantine partitionability is defined as follows. 

\begin{definition}[$t$-Byzantine partitionable graph]\label{def:t:Byzantine:part}
A communication graph $G$ is $t$-Byzantine partitionable if all algorithms executing on $G$ have at least one execution in which at least one pair of correct nodes cannot exchange messages. 
\end{definition}

In practice, the notion of $t$-Byzantine partitionable network does not imply that pairs of correct nodes can never exchange messages. Correct nodes might still be able to communicate if strictly less than $t$ nodes are Byzantine, or if the Byzantine nodes continue to relay messages.
\editms{This notion is directly related to the vertex connectivity of the graph $G$, as stated by the following theorem and its corollary.}

\begin{theorem} \label{def:byzpart}
    A network $G=(V, E)$ is $t$-Byzantine partitionable \editft{iff} there is a set $V_b \subset V$ of $t$ nodes \editft{or less} such that the subgraph induced by $V \setminus V_b$ is partitioned. 
\end{theorem}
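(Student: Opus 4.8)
The plan is to prove the two implications separately, relying on the fact that a graph is \emph{partitioned} in the sense of Def.~\ref{def:partition} exactly when it is disconnected (a partition into $k\ge 2$ non-empty classes with no crossing edges exists iff the graph has at least two connected components). The real work is to bridge the operational Def.~\ref{def:t:Byzantine:part}, which quantifies over \emph{all} algorithms and executions, with the purely graph-theoretic cut condition on the right-hand side; I expect Menger's theorem to be the tool that makes this bridge.

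For the easy direction ($\Leftarrow$), I would assume a set $V_b$ with $|V_b|\le t$ whose removal disconnects $G$, and build an adversarial execution witnessing partitionability. Declare exactly the nodes of $V_b$ Byzantine (legitimate, since $|V_b|\le t$) and let them silently drop every message they are asked to relay. Because $G[V\setminus V_b]$ is disconnected, it has at least two non-empty components; pick correct nodes $u,v$ in two distinct ones. Every $u$--$v$ path in $G$ must traverse some node of $V_b$, as there is no edge across components in $G[V\setminus V_b]$, and those nodes drop all traffic, so no message from $u$ ever reaches $v$. Since this behaviour does not depend on the protocol being run, \emph{every} algorithm admits such an execution, which is exactly what Def.~\ref{def:t:Byzantine:part} requires.

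For the hard direction ($\Rightarrow$), I would argue the contrapositive: assuming $G[V\setminus V_b]$ stays connected for every $V_b$ with $|V_b|\le t$, exhibit one algorithm under which, in every execution with at most $t$ Byzantine nodes, all correct pairs communicate. The hypothesis says no vertex set of size $\le t$ separates any two nodes, so by Menger's theorem every pair of correct nodes $u,v$ is joined by at least $t+1$ internally vertex-disjoint paths. Each Byzantine node can sit as an internal vertex on at most one of these pairwise internally-disjoint paths, so at most $t$ of them are contaminated and at least one consists entirely of correct nodes. A simple signed-flooding relay (each node forwards every newly received authentic message to all of its neighbours) then carries $u$'s message to $v$ along that Byzantine-free path within bounded time, using reliable channels and synchrony; Byzantine nodes elsewhere may inject noise but, unable to forge signatures or to touch the clean path, cannot block delivery. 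Hence no correct pair is ever cut off, and $G$ is not $t$-Byzantine partitionable.

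The main obstacle is this second direction, and within it the passage from ``no small cut'' to ``$t+1$ disjoint paths'': I must apply Menger carefully to non-adjacent pairs (adjacent correct pairs communicate directly over their shared edge, which no vertex removal can destroy) and check the degenerate cases where the graph is too small to host $t+1$ disjoint paths, such as $|V|\le t+1$ or complete graphs, where the right-hand cut condition is simply unsatisfiable and the claim holds vacuously. The subtle point in the flooding argument is arguing robustness against \emph{arbitrary} Byzantine interference rather than mere message dropping, for which the existence of a fully correct path plus unforgeable signatures is the crux.
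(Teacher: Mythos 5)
Your proof is correct, and your backward direction ($\Leftarrow$) is essentially identical to the paper's: make $V_b$ Byzantine, have them drop everything, and observe that messages can only travel on edges of the disconnected induced subgraph. The forward direction is where you diverge. The paper proves it \emph{directly}: it instantiates the universal quantifier over algorithms in Def.~\ref{def:t:Byzantine:part} at a signed-flooding algorithm, takes the failed execution that partitionability guarantees, and argues that if $v_j$ never receives $v_i$'s signed identifier even though all correct nodes relay faithfully, then no all-correct path can exist, so $V\setminus V_b$ is partitioned. You instead prove the contrapositive, exhibiting signed flooding as an algorithm that succeeds in every execution when no small cut exists. Both are valid, but note that your Menger detour is doing no real work: your contrapositive hypothesis --- $G[V\setminus V_b]$ connected for every $V_b$ with $|V_b|\le t$ --- already hands you, for any actual Byzantine placement, a path between $u$ and $v$ consisting entirely of correct nodes; extracting $t+1$ internally disjoint paths and then discarding the contaminated ones re-derives this weaker fact by a longer route, and it is precisely this detour that forces you to fuss over adjacent pairs and degenerate small graphs, none of which the direct argument needs. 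The paper reserves Menger for where it is actually required, namely Corollary~\ref{prop:connectivity} (the link to vertex connectivity) and Lemma~\ref{lemma:spec1}. The one thing your phrasing buys is an explicit robustness argument against arbitrary (not merely message-dropping) Byzantine interference on the clean path, which the paper leaves implicit in ``correct nodes execute $\mathcal{A}$ faithfully.''
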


\begin{proof}
\newcommand{\algo}{\mathcal{A}}
\editft{ $\bullet$ Let us assume that a network $G=(V,E)$ is $t-$Byzantine partitionable, and consider a simple algorithm $\algo$ in which, in round $1$, every node sends its signed identifier to all its neighbors. Then, starting in round $r\geq 2$, each node retransmits in round $r+1$ all the messages received in round $r$. By definition of $t$-Byzantine partitionability, there exists an execution $E_\algo$ of $\algo$ in which at least one pair of correct nodes $(v_i, v_j)$ cannot exchange messages, in particular $v_j$ never receives $v_i$'s signed identifier. Let us note $V_b^{\algo}$ the set of Byzantine nodes in $E_\algo$. Because all nodes in $V\setminus V_b^{\algo}$ are correct, they execute $\algo$ faithfully. The fact that $v_j$ never receives $v_i$'s signed identifier implies there is no path between $v_i$ and $v_j$ in $V\setminus V_b^{\algo}$, and therefore that $V \setminus V_b^{\algo}$ is partitioned.

$\bullet$ Turning to the reverse implication, consider a network $G=(V,E)$ such that there is a set $V_b \subset V$ of $t$ nodes such that the subgraph $G_b$ induced by $V \setminus V_b$ is partitioned. Consider an algorithm $\algo$ executing on $G$. Consider an execution $E_\algo$ of $\algo$ in which all nodes in $V_b$ are Byzantine (which is possible as $|V_b|=t$) and all nodes in $V\setminus V_b$ are correct. Assume all Byzantine nodes drop all messages they receive. As a result, in $E_\algo$, the messages of $\algo$ sent by correct nodes may only travel on the edges of $G_b$, the graph induced by $V \setminus V_b$. As $G_b$ is partitioned, there exist two correct nodes in $V\setminus V_b$ that cannot exchange messages in $E_\algo$.}
\end{proof}

Since \editft{in a given execution} the identity of the $t$ possible Byzantine processes is unknown, using Theorem~\ref{def:byzpart} to test whether a graph $G$ is t-Byzantine partitionable requires that every possible set $V_b$ of $t$ processes be tested.
\editft{Such a test directly translates into a condition on the \emph{vertex-connectivity} of the graph $G$, defined as the size of the smallest vertex subset of $G$ that partitions $G$.}

This translates into the following condition: 

\begin{corollary}\label{prop:connectivity}
    A network $G=(V,E)$ is $t$-Byzantine partitionable iff. its vertex connectivity is lower than or equal to $t$.  
\end{corollary}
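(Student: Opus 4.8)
The plan is to read the corollary as an almost immediate reformulation of Theorem~\ref{def:byzpart} in the vocabulary of vertex connectivity, so the work consists in carefully unpacking the definition of $k$ rather than in any new argument. Recall that Theorem~\ref{def:byzpart} characterises $t$-Byzantine partitionability of $G=(V,E)$ by the existence of a set $V_b\subseteq V$ with $|V_b|\le t$ whose removal leaves the induced subgraph $G[V\setminus V_b]$ partitioned in the sense of Def.~\ref{def:partition}. The vertex connectivity $k$ is, by its stated definition, the size of the \emph{smallest} vertex subset whose removal partitions $G$. Hence the whole proof reduces to establishing that the statement ``a partitioning vertex set of size at most $t$ exists'' is equivalent to the inequality $k\le t$.

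First I would prove the forward direction: if $G$ is $t$-Byzantine partitionable, then Theorem~\ref{def:byzpart} supplies a partitioning set $V_b$ with $|V_b|\le t$; since $k$ is by definition the minimum size taken over all partitioning sets, we get $k\le |V_b|\le t$. For the converse, if $k\le t$ then the set witnessing the minimum is itself a partitioning set of size $k\le t$, and it plays exactly the role of $V_b$ in Theorem~\ref{def:byzpart}, so $G$ is $t$-Byzantine partitionable. Each implication is a single application of the defining property of a minimum, and no computation is involved.

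The only step that genuinely requires care — and the point I expect to be the main obstacle — is the boundary behaviour when $G$ admits \emph{no} partitioning vertex set at all, that is, when $G$ is complete: no removal of vertices can ever leave an induced subgraph that is partitioned in the sense of Def.~\ref{def:partition}. In that case Theorem~\ref{def:byzpart} already tells us that $G$ is not $t$-Byzantine partitionable for any $t$, so for the equivalence to hold $k$ must be understood as exceeding every such $t$ (equivalently, $k=+\infty$ under the reading ``no partitioning subset exists''). I would therefore make explicit that the definition of $k$ as ``the size of the smallest partitioning subset'' is adopted with this convention for complete graphs, rather than the textbook convention $k=n-1$, which would make both sides disagree. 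Once this is fixed, the two implications above close the proof on every graph.
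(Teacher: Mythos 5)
Your proof is correct and takes essentially the same route as the paper, whose own proof simply states that the corollary follows directly from Theorem~\ref{def:byzpart} together with its definition of vertex connectivity as the size of the smallest vertex subset that partitions $G$ --- exactly the unpacking you carry out. Your extra remark about complete graphs (where no partitioning subset exists, so the equivalence requires reading $k$ as exceeding every $t$ rather than the textbook $k=n-1$) is a legitimate boundary case that the paper glosses over, but it does not change the argument.
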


\begin{proof}
    \editft{This directly follows from the \cref{def:byzpart} and the definition of the vertex connectivity of a graph.}%
\end{proof}

\editft{In a graph with connectivity $k$ larger than $t$, the subgraph of correct nodes remains connected no matter how the $t$ Byzantine nodes are placed. As a result, correct nodes can continue to exchange messages (possibly indirectly) independently of the Byzantine nodes' behavior. For instance, the graph in \cref{subfig:2:connected} is 2-connected. With $t=1$, a single Byzantine node cannot prevent the remaining correct nodes from communicating with each other, whichever its position in the graph.

By contrast, $k \le t$ does not necessarily imply that correct nodes cannot communicate. The disruption that Byzantine nodes may cause will depend in this case on their position in the graph. However, $k \le t$ implies that at least one such placement exists in which the $t$ Byzantine nodes can prevent correct nodes from communicating with each other. 
In the star graph shown in \cref{subfig:1:connected}, if $t=1$, a Byzantine node will prevent correct nodes from communicating with each other only if it is placed in the center position.
}

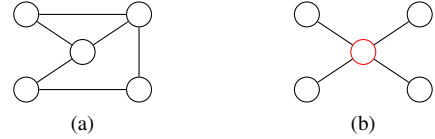
\begin{figure}[tb]
    \hfill
    \subfloat[][]{\label{subfig:2:connected}          
        \begin{tikzpicture}
          \node[shape=circle, draw=black] (F) at (-1,0) {};
          \node[shape=circle, draw=black] (G) at (0.5,0) {};
          \node[shape=circle, draw=black] (H) at (-0.25,-0.5) {};
          \node[shape=circle, draw=black] (I) at (0.5,-1) {};
          \node[shape=circle, draw=black] (J) at (-1,-1) {};
          \path[] (F) edge node[left] {} (H) ;
          \path[] (G) edge node[left] {} (H) ;
          \path[] (J) edge node[left] {} (H) ;
          \path[] (F) edge node[left] {} (G) ;
          \path[] (G) edge node[left] {} (I) ;
          \path[] (I) edge node[left] {} (J) ;
    \end{tikzpicture}
    }\hfill{}
    \subfloat[][]{\label{subfig:1:connected}
    \begin{tikzpicture}
         \node[shape=circle, draw=black] (A) at (-5,0) {};
          \node[shape=circle, draw=black] (B) at (-3.5,0) {};
          \node[shape=circle, draw=red] (C) at (-4.25,-0.5) {};
          \node[shape=circle, draw=black] (D) at (-3.5,-1) {};
        \node[shape=circle, draw=black] (E) at (-5,-1) {};

           \path[] (A) edge node[left] {} (C) ;
          \path[] (B) edge node[left] {} (C) ;
          \path[] (D) edge node[left] {} (C) ;
          \path[] (E) edge node[left] {} (C) ;
    \end{tikzpicture}
    }
    \hfill
    \hbox{}

    \caption{%
     (a): A graph that is not 1-Byzantine partitionable. No matter the placement of a Byzantine node, the subgraph of correct nodes remains connected.
     (b): A 1-Byzantine-partitionable graph. If the red node is Byzantine, then the subgraph of the correct nodes is partitioned. 
    }%
    \label{fig:example_partitiable}
\end{figure}

\subsection{Byzantine network partition detection}
Ideally, a partition detection algorithm should ensure that all correct nodes reach the same correct conclusion: either that they will remain able to communicate with one another, independently of the behavior of Byzantine nodes, or that they will not.
Unfortunately, the boundary between these two situations is not as clear-cut as it seems.
This is because $t$-Byzantine partitionability (\cref{def:t:Byzantine:part}) captures the existence of a worst case where at least one placement of Byzantine nodes can disrupt the communication between correct nodes. This worst case is not guaranteed and only occurs if Byzantine nodes form a vertex cut of the graph $G$. If they do not, even a $t$-Byzantine partitionable graph $G$ will allow correct nodes to communicate with one another independently of Byzantine behaviors. 

Because correct nodes do not know which nodes are Byzantine, and Byzantine nodes might pretend to be correct (at least while the partition detection algorithm executes), correct nodes cannot distinguish a favorable placement from an unfavorable one. In a Byzantine context, a partition detection algorithm must, therefore, foresee two outcomes:
\begin{itemize}
    \item \notPartitionable. No placement of Byzantine nodes can disconnect correct nodes.
    \item \partitionable. Byzantine nodes might be able to disconnect correct nodes (but this is not certain).
\end{itemize}

$t$-Byzantine partitionability (and vertex connectivity) might appear as a natural metric to characterize these two cases.
Unfortunately, correct nodes initially do not know $G$, and must instead collaborate to gather information regarding $G$. Byzantine nodes can disrupt this process, and distort 
 how correct nodes perceive $G$ (by not communicating some edges, for example). 
As a result, if 
a correct process observes a connectivity of $t$, it cannot
distinguish between the following two situations: 
(i) the network connectivity is indeed equal to $t$ and the network is $t$-Byzantine partitionable; and (ii) the network connectivity is in fact higher than $t$ but some Byzantine nodes omitted to send some messages, thus corrupting this correct node's perception. In this case, the correct node can only conclude that the network \emph{might} be $t$-Byzantine partitionable. 
One direct consequence of this scenario is that 
correct nodes might conclude to a partitionable graph, where the real connectivity is great enough to avoid partition. 


\subsection{Formal Specification}
\editft{In the light of the previous discussion, we define a network partition detection algorithm as a distributed algorithm that provides each node with a call-back $\decide()$ that returns one of \editft{two} possible values: \notPartitionable, or \partitionable. 
Correct processes execute $\decide()$ once\footnote{The specification and the algorithm we present assume a static graph and are therefore \emph{one-shot}. In practical cases, the connectivity graph might, however, evolve over time. In such cases, we assume that the graph remains static long enough for the algorithm to execute.}.}
\editft{The algorithm is further parameterized by a known threshold $k_0>t$ that characterizes its tendency to make conservative decisions.}
\begin{definition}[\editft{$t$-Byzantine-resilient, $k_0$-sensitive network partition detection}] \label{def:properties}
We say that a network partition detection algorithm $\mathcal{A}$ is $t$-Byzantine-resilient and \editft{$k_0$-sensitive} if it satisfies the following properties \editft{for any communication graph $G$ and placement of Byzantine nodes $V_b$ (with $|V_b|\leq t$)}:
\editms{\begin{itemize}
    \item \editms{\textbf{Termination.} All correct nodes decide within a bounded amount of time.}
    \item \textbf{Agreement.} \editft{All correct nodes decide the same value.}
    \item \editft{\textbf{Safety.} If Byzantine nodes can effectively prevent communication between correct nodes (formally, if \editft{$V_b$ is a vertex cut} of $G$) then no correct process ever decides \notPartitionable.}
    \item \editms{\editms{\textbf{$k_0$-Sensitivity.} \editft{If $G$'s connectivity is higher or equal to $k_0$}, then all correct nodes decide \notPartitionable.}}
\end{itemize}}
\end{definition}

The algorithm we present in the next section, \solname, fulfills the properties of \cref{def:properties} with $k_0=2t$. It also provides an additional Boolean output, \confirmed, that indicates that a correct node \editms{has detected an actual partition, i.e., 
that Byzantine nodes can effectively filter/cut off communications between correct nodes.} 
Formally, this additional output fulfills the following property:
\begin{itemize}
    \item \textbf{Validity.} If a correct node computes \confirmed $= \true$, then $V_b$ is a vertex cut of the graph $G$. 
\end{itemize}

\section{\editms{\solname: Neighbors Exploring Connections Toward Adversary resilience}}
\label{sec:algo}

\solname solves the $t$-Byzantine-resilient, \editft{$2t$-sensitive} network partition detection problem that we have just defined. \solname does not require nodes to know the underlying network topology, except for the identity of their immediate neighbors (in the form of neighborhood proofs).
\editft{To the best of our knowledge, it is the first algorithm to solve the problem of network partition detection in arbitrary graphs in a Byzantine context.}

\subsection{\editft{Overview}}


\textbf{Inputs/Output.} \solname executed at a process $p_i$ takes 4 parameters as input: (i) the number of nodes in the system ($n$); (ii) the maximum number of Byzantine nodes ($t$); (iii) the neighborhood $\Gamma(i)$ of the local node $i$; (iv) a proof of neighborhood for each of its neighbors ($\proofWord_{i,j}$ for $j \in \Gamma(i)$).

\editft{A \emph{proof of neighborhood} $\proofWord_{p_i, p_j}$ is 
a cryptographic object used by $p_i$ to declare an edge with process $p_j$ that cannot be forged as soon as either $p_i$ or $p_j$ is correct.}

\solname's output is one of two possible decisions: \notPartitionable and \partitionable. \editms{It provides moreover an additional indicative Boolean output, \confirmed.} The \notPartitionable value corresponds to a situation where the graph is not partitioned and cannot be partitioned by the $t$ Byzantine nodes. \partitionable corresponds to the case where the node evaluates that the graph is connected but that its connectivity appears lower than $t$, i.e., the graph is $t$-Byzantine partitionable. The \confirmed boolean identifies a partitioned network \editms{where some nodes are unreachable.} \\

\textbf{Intuition.} The key idea of \solname is to make nodes estimate the vertex-connectivity of the network to detect whether the network is partitioned or could be partitioned by $t$ Byzantine processes not relaying messages.  To do so, nodes disseminate their edges in signed messages and evaluate the size and connectivity of the graph they obtained after a limited number of rounds. \editms{Focusing on communication edges instead of trying directly to communicate with every node is one way to avoid correct nodes to conclude to a partition situation, when faulty nodes act as crashed ones, for example, even if the communication network is actually connected. It, however, strongly relies on the hypothesis that each node knows its neighborhood (and can prove it). The use of signature chains, with a length corresponding to the number of rounds, limits the impact of Byzantine nodes but also ensures that all correct nodes will reach the same conclusion. Sec~\ref{sec:specification} gives the intuition on how \solname satisfies the properties of Def~\ref{def:properties}. }


\subsection{Pseudocode} 

Alg.~\ref{algo1} details \solname's pseudocode, which we discuss in the following. \\

\textbf{Initialization.} Initially, each node keeps in memory an adjacency matrix that will contain all the edges it discovers during the algorithm's execution. 
This adjacency matrix is initialized as follows (ll.~\ref{algo:init_phase}-\ref{algo:end_init_phase}): for each neighbor, the matrix contains locally-generated proofs of the neighborhood. For example, if $i$ and $j$ are neighbors, then $i$ will initialize $G_i$ such that $G_i[i,j] = \proofWord_{i,j}$. Each node also initializes empty buffers $(to\_be\_sent\_i)_i$, that are necessary for the synchronous communication phase. \\

\textbf{Edge Propagation phase.} This phase comes right after the initializing phase (ll.~\ref{algo:edge_prop}-\ref{algo:end_edge_prop}). Nodes communicate synchronously during $n-1$ communication rounds. This phase allows nodes to share their knowledge of the graph, i.e., transmit the edges they know to their neighbors. In particular, nodes send their neighborhood (with signed proofs) during the first round (ll.~\ref{algo:first_round}-\ref{algo:end_first_round}). When a node receives a signed message, it verifies it and relays it with its signature, which generates signature chains, to its neighborhood at the beginning of the next round (ll.~\ref{algo:next_round}-\ref{algo:end_next_round}). Upon reception of a signature chain (ll.~\ref{algo:rcv_msg}-\ref{algo:end_edge_prop}), a process checks its correctness, but also its length (l.~\ref{algo:check_length}). A correct execution implies that the length of the signatures chain is always equivalent to the current round number. This verification prevents Byzantine nodes from transmitting late messages. 
To avoid over-flooding, nodes make sure not to resend an edge they have already sent (l.~\ref{algo:check_length}). \\

\textbf{Decision phase.} After $n-1$ rounds, all correct nodes have the same adjacency matrix and are able to compute if all nodes are reachable and if the vertex-connectivity is greater than $t$, to determine if the graph is partitioned or $t$-Byzantine partitionable (ll.~\ref{algo:decision}-\ref{algo:end_decision}). Note that to accurately evaluate the topology, the number of rounds to use $R$ should be larger than the graph diameter. Since we assume that nodes do not know the topology, $n-1$ rounds are the lowest $R$ value one can use (the worst case being the chain topology). Choosing a different value for $R$ does not change the message complexity of \solname since no node will learn a new edge after the round that corresponds to the graph diameter, and all correct nodes will stay silent in the following rounds. \\

  \newcommand{\announcePhase}[1]{%
    \smallskip%
  
    {\SetKwComment{Comment}{}{}\Comment{\fbox{#1}}}
    \vspace{0.25em}%
  }

\begin{algorithm}[tb]
\label{algo1}
\small
\Indm
\Input{$n, t, \Gamma(i), (\proofWord_{i,j})_{j\in \Gamma(i)}$ \Comment*{for node $i$} }
\Output{\decide $\left\{\begin{array}{@{}l@{}}\text{\notPartitionable or} \\\text{\partitionable}\end{array}\right.$} 
\Indp
\announcePhase{Initialising $G_i$} 
$G_i = matrix(n, n)$ \Comment*[f]{empty matrix} \; \label{algo:init_phase}  
\ForEach{$\neighbor \in \Gamma(i)$}{$G_i[i,\neighbor] \leftarrow \proofWord_{i,\neighbor}$} 
$to\_be\_sent_1$ = \o \label{algo:end_init_phase} \;
\announcePhase{Edge Propagation}
\Round{$R \in [1,n-1]$ \label{algo:edge_prop}}{
\eIf{$R = 1$\label{algo:first_round}}{
\ForEach{$\neighbor \in \Gamma(i)$}{
\send ${\big \{}\sigma_i(\proofWord_{i,j}){\big \}}_{j \in \Gamma_i }$ \sendto $\neighbor$}\label{algo:end_first_round}
}{
\ForEach{$(msg, k) \in to\_be\_sent_{R}$ \label{algo:next_round}}{
\ForEach{$\neighbor \in \Gamma(i) \setminus k$}{\send $\sigma_i(msg)$ \sendto $\neighbor$ \label{algo:end_next_round}
}}}

\WhenReceived{$msg = \sigma_k(\sigma_x(\sigma_y(...., \sigma_u(\proofWord_{u,v}))))$ {\bf from} $k$ \label{algo:rcv_msg}}{
    \Comment{Invalid messages are ignored}
    \If{$\lengthSign(msg) = R$ and $G_i[u][v] \neq \textsf{nil}$ \label{algo:check_length}}{
    
   \ensuremath{\mathsf{add}}\xspace $(msg, p_k)$ \ensuremath{\mathsf{to}}\xspace $to\_be\_sent_{R+1} $ 
    $G_i[u][v] = \proofWord_{u,v}$} 
}}\label{algo:end_edge_prop}
\announcePhase{Decision}
$r = \reachable(G_i)$ \label{algo:decision} \;
$k = \connectivity(G_i)$ \;
\eIf{$k > t$ and $r = n $}
{ $\confirmed = \false$ \; decide(\notPartitionable) }
{ 
\lIf{$r = n$}
{ $\confirmed = \false$}
\lElse{ $\confirmed = \true$} \label{algo:end_decision}
decide(\partitionable) \;
}
\caption{\solname{}'s pseudocode}
\end{algorithm}


\textbf{Impact of Byzantine deviations}. \solname limits Byzantine nodes' ability to lie about their neighborhood or create knowledge disparities among correct nodes. However, it cannot compel Byzantine processes to share their own neighborhood correctly. In particular, edges that connect two Byzantine nodes might never be discovered, which might decrease the graph's vertex connectivity below $t$. In this case, correct nodes will decide that the graph is \partitionable, while it is not in reality. However, pairs of Byzantine nodes that declare fictitious edges connecting them are not an issue because these edges will never increase the vertex-connectivity above $t$ if the subgraph of correct nodes is partitioned. In this case, the correct nodes will decide that the graph is  \partitionable. 

\subsection{Agreement\editms{, Safety, and Sensitivity} Properties} \label{sec:specification}

We now \editms{give the intuition on} how \solname enforces the properties of Def.~\ref{def:properties}, which are more formally proven in Sec.~\ref{sec:proofs}.

The values that correct processes decide depend on the graph connectivity and on the connectivity of the correct process subgraph. We identify three main cases: 

$\bullet$ If $2t \le k$ (case 1), then all correct nodes decide \notPartitionable.

$\bullet$ The case $k \le t$ is divided in two subcases. First, if $0 < k \le t$ and the subgraph of correct nodes is connected (case 2.1), then all correct nodes decide \editms{\partitionable (with $\confirmed =\true$ if they detect a real communication issue, \confirmed $= \false$ otherwise)}, or \notPartitionable. \editms{Note that the \notPartitionable value can be decided even if $k \le $t, but is not an issue, as discussed previously about the impact of Byzantine deviations in Sec~\ref{sec:algo}. }
Second, if $0 < k \le t$ and the correct nodes subgraph is not connected (case 2.2), then all correct nodes decide \partitionable with \confirmed $=\true$ if they detect a real communication issue, \confirmed $= \false$ otherwise). In this case, correct nodes do not necessarily reach the same conclusion about \confirmed boolean, but they all decide that the network is \partitionable. For those two cases, note that if $k = 0$, then all correct nodes decide \partitionable (Byzantine nodes cannot increase the connectivity more than $t$). If we consider that Byzantine nodes can exchange messages through another network, correct nodes, however, do not necessarily compute $\confirmed = \true$ 

$\bullet$ If $t < k < 2t$ (case 3), then either all correct nodes decide  \partitionable{} \editms{ (Byzantine nodes might not share some edges, that might decrease the perceived connectivity below $t$)} or they all decide \notPartitionable. However, correct nodes compute \confirmed boolean to $\false$


\subsection{Proof of correctness} 
\label{sec:proofs}

%

\editms{We divide our proof in four lemmas, that refer to the three cases of the previous section (Sec.~\ref{sec:specification})}. \mschange{We note $G_{i,r}$ the state of the adjacency matrix of node $i$ at the end of round $r$ ($r=0$ corresponds to the initial state of $G_i$), $B$ the set of Byzantine nodes, and $C$ the set of correct nodes.}
\begin{lemma} \label{lemma:spec1} \editms{(case 1)}
    If the graph is (at least) $2t$-connected, then all correct nodes decide \notPartitionable.
\end{lemma}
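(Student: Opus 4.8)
The plan is to show that when $G$ is at least $2t$-connected, every correct node ends the Edge Propagation phase with an adjacency matrix whose reachability count is $n$ and whose computed vertex connectivity exceeds $t$, so that the first branch of the Decision phase (lines~\ref{algo:decision}--\ref{algo:end_decision}) fires and each correct node decides \notPartitionable. The argument splits naturally into two pieces: (i) characterizing exactly which edges appear in a correct node's matrix $G_{i,n-1}$, and (ii) arguing that the graph reconstructed from those edges is still connected and has connectivity strictly greater than $t$.

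First I would establish an edge-dissemination invariant. The key claim is that for every edge $(u,v)\in E$ incident to \emph{at least one correct node}, every correct node $i$ has $G_{i,n-1}[u][v]=\proofWord_{u,v}$. The reasoning is that such an edge is announced in round~$1$ by its correct endpoint (line~\ref{algo:end_first_round}), carries an unforgeable proof of neighborhood, and is then relayed hop-by-hop along paths of correct nodes. Because signatures cannot be forged on an edge involving a correct node, a Byzantine node can at worst \emph{drop} such a message, not fabricate or alter it; and since correct nodes relay every newly-learned valid edge in the next round (lines~\ref{algo:next_round}--\ref{algo:end_next_round}), and the length check on line~\ref{algo:check_length} only rejects stale messages, the edge propagates to every correct node reachable through correct intermediaries. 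I would then need to argue that $n-1$ rounds suffice: since any two nodes are at most $n-1$ hops apart in a connected component, and the propagation advances at least one hop per round, the matrix stabilizes by round $n-1$. Conversely, I would note that the \emph{only} edges that may be missing are those between two Byzantine nodes, which Byzantine nodes can refuse to announce.

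Next I would turn this into a connectivity statement. Let $G_i$ denote the graph reconstructed from $G_{i,n-1}$. By the invariant, $G_i$ contains all of $G$ except possibly some edges internal to $B$, i.e.\ $G_i \supseteq G \setminus \{\text{edges with both endpoints in } B\}$. The crucial observation is that deleting edges \emph{internal to a vertex set of size at most $t$} cannot drop the vertex connectivity below $k-t$: removing any such edge is dominated by removing one of its (Byzantine) endpoints, and there are at most $t$ such endpoints to remove. Since $G$ is $2t$-connected, $G_i$ therefore has connectivity at least $2t-t=t+1>t$. In particular $G_i$ is connected, so $\reachable(G_i)=n$ and $\connectivity(G_i)>t$. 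Both conditions of the \textbf{if} on the Decision phase hold, forcing \confirmed $=\false$ and decide(\notPartitionable) at every correct node.

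The main obstacle I anticipate is making the connectivity-drop bound rigorous: I need a clean lemma stating that removing edges whose endpoints all lie within a set $V_b$ of $\le t$ vertices reduces vertex connectivity by at most $t$. The tempting shortcut --- that each deleted edge lies inside the ``danger zone'' $B$ and is therefore neutralized by deleting $B$ --- must be turned into a correct argument about vertex cuts: I would show that any vertex cut of $G_i$ of size $\le t$ would, after adding back the missing $B$-internal edges, still separate $G$ once we also remove the (at most $t$) Byzantine vertices, contradicting $2t$-connectivity. Framing it via cuts rather than edge-counting is what I expect to require the most care, and I would want to double-check the boundary behavior when Byzantine nodes lie on both sides of a candidate cut.
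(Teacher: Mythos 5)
Your overall strategy is the same as the paper's: show that after $n-1$ rounds every correct node's matrix contains every edge of $G$ except possibly edges joining two Byzantine nodes, and then argue that deleting only Byzantine-internal edges from a $2t$-connected graph leaves connectivity above $t$, so the first branch of the Decision phase fires. The dissemination half is fine (the paper justifies the existence of all-correct relay paths via Menger's theorem, which you only gesture at, but that follows immediately from $2t$-connectivity and $|B|\le t$).

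The genuine gap is in your connectivity bound, and it is exactly the step you flagged as delicate. Your stated lemma --- deleting edges internal to a set of at most $t$ vertices cannot drop the connectivity below $k-t$ --- yields only $\connectivity(G_i)\ge 2t-t=t$ (your ``$2t-t=t+1$'' is an arithmetic slip), and $\connectivity(G_i)\ge t$ does \emph{not} satisfy the algorithm's test $k>t$; the whole lemma would fail to go through. The paper needs, and asserts, the sharper bound that the drop is at most $t-1$. Your proposed repair in the last paragraph does not close this: a cut $S$ of $G_i$ with $|S|\le t$ together with the $\le t$ Byzantine vertices gives a separator of size up to $2t$, and a separator of size exactly $2t$ is perfectly consistent with $2t$-connectivity, so no contradiction is reached. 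To actually get the $t-1$ bound you must argue more finely: if $S$ (with $|S|\le t$) disconnects $G_i$, either some component $C$ of $G_i-S$ other than the ``main'' one contains a non-Byzantine vertex, in which case $S\cup(B\setminus C')$ for a component $C'$ meeting $B$ is a separator of $G$ of size at most $2t-1$ (every path leaving $C'$ in $G$ but not in $G_i$ must enter $B\setminus C'$), or else some component lies entirely inside $B$, in which case one of its vertices has all its non-$B$ neighbours in $S$ and hence degree at most $|S|+|B|-1\le 2t-1<\kappa(G)$, a contradiction with the minimum-degree bound. Without one of these case analyses (or an equivalent), the decisive inequality $\connectivity(G_i)\ge t+1$ is not established.
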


\begin{proof}
    We suppose that the graph is $2t$-connected.
    First, we show that every correct node computes $r = n $.
    As we suppose that the graph is $2t$-connected, by definition, the graph is therefore also $t+1$-connected. Due to Menger's theorem~\cite{menger1927allgemeinen}, there exist at least $t+1$ vertex independent paths between every pair of nodes. Because there are at most $t$ Byzantine nodes, it implies that every Byzantine node is the neighbor of a correct node. With our notations, this property can be written as: 
    \begin{equation}\label{eq:forall:b:exists:i:i:knowns:of:b}
       \forall b \in B,\ \exists i \in C: G_{i,0}[i, b] = \sigma_i(\proofWord_{i,b}) 
    \end{equation}

    We now show that for two correct nodes, $c_1$ and $c_2$, at the end of $n-1$ rounds, we have:
    \begin{equation} \label{eq:forall:c:c:will:know}
        \forall c_1, c_2 \in C,\  G_{c_1,1} \subseteq G_{c_2,n-1} 
    \end{equation}

    Using Menger's theorem~\cite{menger1927allgemeinen}, because the graph is at least $t+1$-connected, there is a path $(p_1, ..., p_m)$ of correct nodes in $G$ such that $c_1 = p_1$, $c_2 = p_m$ and $ m \leq n$. 
    We can trivially show by induction on $k$ ($k \in [1,m-1]$), that $ G_{p_1,0} \subseteq G_{p_{1+k}, k}$, because the neighborhood of node $c_1$ propagates from node to node in each round (assuming synchronous communication) following the correct node path.
    For $k = m-1$, we have thus $G_{c_1,0} \subseteq G_{c_2,m-1}$.

    A corollary of Equations~\ref{eq:forall:b:exists:i:i:knowns:of:b} and~\ref{eq:forall:c:c:will:know} is:
    \begin{equation} \label{eq:nodes:know}
        \forall b {\in} B\cup C,\ \forall i {\in} C,\ \exists k {\in} C: G_{i,n-1}[k, b] = \sigma_k(\proofWord_{k,b}) 
    \end{equation}
    This corollary means that each correct node $i$ sees every node $b$ as reachable, and therefore computes $r = n$.

    Second, we show that every correct node computes $k > t$. 

    A direct corollary of Equation~\ref{eq:nodes:know} is that each node sees every edge between two correct nodes (if $b \in C$) and every edge between correct nodes and Byzantine nodes (if $b \in B$). Because we suppose that the graph is $2t$-connected, \editms{even by removing the edges between Byzantine nodes, which can drop the connectivity by at most $t-1$}, 
    every node computes (at least) a $t+1$-connectivity. 
    Then every node decides value \notPartitionable. 

\end{proof}

\begin{lemma} \label{lemma:spec2} \editms{(case 2.1 and 3)}
    If the subgraph of correct nodes is connected, then all correct nodes decide the same decision.
\end{lemma}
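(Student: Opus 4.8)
The plan is to reduce agreement to a statement about the adjacency matrices. Since each node's output is a deterministic function of $r=\reachable(G_{i,n-1})$ and $k=\connectivity(G_{i,n-1})$, it suffices to prove that, when the subgraph of correct nodes is connected, all correct nodes hold the \emph{same} graph at the end of round $n-1$. I would establish this by partitioning the edges according to how many of their endpoints are correct, and proving agreement for each class.

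For an edge with at least one correct endpoint, the associated proof of neighborhood cannot be forged, so no correct node ever records such an edge unless it genuinely exists, and its correct endpoint broadcasts it in round~$1$. First I would show, by the same induction as in the proof of \cref{lemma:spec1} (propagation along a path of correct nodes, one hop per round), that this announcement reaches every correct node within $n-1$ rounds, using that the correct subgraph is connected and hence of diameter at most $|C|-1 \le n-1$. Consequently all correct nodes agree on every edge incident to a correct node.

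The delicate case---and the main obstacle---concerns edges between two Byzantine nodes, whose proofs Byzantine nodes may forge and whose dissemination they fully control. Here I would argue that if \emph{any} correct node records such an edge $e$, then \emph{all} do. Let $r^*$ be the earliest round at which some correct node $c^*$ records $e$. By minimality, every signer appearing in the signature chain that delivered $e$ to $c^*$ must be Byzantine: a correct signer at position $j$ would itself have recorded $e$ at round $j<r^*$, contradicting the choice of $r^*$. Since a valid length-$\ell$ chain accepted at round $\ell$ consists of $\ell$ distinct signers (a chain with a repeated signer being rejected as invalid), this all-Byzantine chain satisfies $r^* \le |B| \le t$. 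Thus $c^*$ records $e$ no later than round $t$ and forwards it in the next round; propagating it along the connected correct subgraph then reaches every correct node by round $r^* + (|C|-1) \le |B| + |C| - 1 = n-1$. Hence every correct node records $e$ as well.

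Combining the three classes, all correct nodes hold identical graphs after round $n-1$, compute identical values of $r$ and $k$, and therefore return identical decisions, which is the claim. I expect the crux to be the timing argument of the Byzantine--Byzantine case: it is exactly here that the signature-chain discipline is indispensable, both to forbid Byzantine nodes from inflating a chain so as to deliver an edge only at the final round (when re-propagation is impossible) and to guarantee that the $n-1$ round budget is simultaneously tight and sufficient. I would therefore make explicit the assumption that a valid chain has distinct signers, as the whole bound $r^*\le|B|$ rests on it.
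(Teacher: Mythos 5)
Your proof is correct and follows essentially the same route as the paper's: reduce agreement to equality of the adjacency matrices at round $n-1$, handle edges with a correct endpoint by one-hop-per-round propagation along the connected correct subgraph, and handle Byzantine--Byzantine edges with the Dolev--Strong earliest-reception argument, bounding the first round $r^*$ at which a correct node accepts such an edge by $t$ and then re-propagating within the remaining $|C|-1$ rounds. Your explicit insistence that a valid chain must have distinct signers is a hypothesis the paper leaves implicit (it only cites Dolev and Strong), and you are right that the bound $r^*\le|B|$ collapses without it, so making it explicit is a genuine improvement rather than a deviation.
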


\begin{proof}
    Let us prove that, at the end of round $n-1$, every correct node computes the same graph topology, noted $G_f$:
    \begin{equation} \label{eq:same:graph}
        \forall i,j \in C,\ G_{i,n-1} = G_{j,n-1} = G_f
    \end{equation}

    Because the subgraph of correct nodes is connected, we have:
    \begin{equation} \label{eq:voisins}
        \begin{multlined}
        \forall i, j \in C,\ \exists k \in [1, n-1],\ \forall u \in \Gamma(i),\\  G_{j,k}[i,u] = \proofWord_{i,u}
        \end{multlined}
    \end{equation}
    This comes from the fact that each node propagates its neighborhood to every other reachable node. Since the subgraph of correct nodes is connected, every correct node will receive this neighborhood during the $n-1$ rounds.

Each message sent by a correct node will thus be received by every other correct node. Thus, we have:
    \begin{equation}
        \forall i,j \in C,\ G_{i,0} \subseteq G_{j,n-1}
    \end{equation}

    We now show that $\forall i,j \in C,\ \forall u \in B \cup C,\ \forall b \in B,\   \forall k \in [1, n-1], $
    \begin{equation}
        \begin{multlined}
         G_{i,k}[b,u] = \proofWord_{b,u} \Rightarrow 
         \left\{\begin{array}{l}
              \exists k' \in [1, n-1],\\
              \hspace{1em}G_{j,k'}[b,u] = \proofWord_{b,u}
         \end{array}\right.
         \end{multlined}
    \end{equation}
    Intuitively, if a correct node receives (and accepts) a message from a Byzantine node during the $n-1$ rounds, then every correct node will receive this same message.

    We use the same argument as Dolev and Strong~\cite{dolev_authenticated_1983}: 
    if a message  $msg {=} \sigma_r(\sigma_{r-1}(\sigma_{r-2}(...\sigma_{1}(proof_{b,u}))))$ is received by a correct process for the first time in round $r\leq n-1$ (meaning that no other correct process has received $msg$ in a round $r'\in[1,r-1]$), then all the processes that have signed $msg$ in the rounds $[1,r]$ must be faulty, and $r\leq t-1$. The correct node that is the first to receive the message $msg$  will propagate a signed message (i.e., will add its signature to the chain and forward it to its neighbors) containing $msg$ in round $r+1$,  and all correct processes will receive $msg$ at the latest by round $r+1+d-1\leq t+d-1$ where $d$ is the diameter of the graph of correct nodes ($d-1$ rounds is thus sufficient for a message to travel among the graph of correct nodes). As $d<n-t$ ($n-t$ is the number of correct nodes), all correct nodes receive a signatures chain containing $msg$ by round $n-1$. 

    One can note that this earlier argument makes the assumption that the system contains exactly $t$ Byzantine nodes. If the number of effective Byzantine nodes is lower than $t$, the same reasoning holds.
    
    So finally, 
    \begin{equation}
        \forall i,j \in C,\ G_{i,n-1} = G_{j,n-1} = G_f
    \end{equation}
    
    At the end of the $n-1$ rounds, all correct nodes will have the same view of the graph topology ($ \forall i \in C, G_{i,n-1} = G_f$). Thus, every correct node will compute the same $r$ and the same $k$, leading them to reach the same decision.
\end{proof}

\begin{lemma} \label{lemma:spec3} \editms{(case 2.2)}
    If the subgraph of correct nodes is disconnected, then all correct nodes will decide \partitionable. 
\end{lemma}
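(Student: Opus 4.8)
The plan is to show that for every correct node the guard that triggers a \notPartitionable decision---the conjunction ``$k > t$ and $r = n$''---can never hold, so that every correct node takes the alternative branch and decides \partitionable. Since the subgraph induced by $C$ is disconnected, I would start by fixing a partition $C = C_1 \sqcup C_2$ into two nonempty sets with no edge of $E$ joining $C_1$ to $C_2$. The whole argument is then carried out per correct node $i$, since the lemma only requires each correct node to decide \partitionable (they need not agree on \confirmed).

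The key structural fact I would establish first is that a correct node never records a fictitious edge between two correct nodes. A correct node writes $G_i[u][v] = \proofWord_{u,v}$ only after validating a signature chain whose innermost element is $\proofWord_{u,v}$; when both $u$ and $v$ are correct, such a proof cannot be forged (Byzantine nodes cannot forge a proof of neighborhood involving a correct node), so $(u,v)$ genuinely belongs to $E$. Consequently, for every correct node $i$, the restriction of $G_{i,n-1}$ to the correct set $C$ is a subgraph of the true correct subgraph $G[C]$, and therefore contains no edge between $C_1$ and $C_2$.

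I would then conclude with a short case split on $r = \reachable(G_{i,n-1})$. If $r \neq n$, the guard fails immediately and $i$ decides \partitionable. If $r = n$, then $G_{i,n-1}$ is connected and spans all $n$ vertices; removing the (at most $t$) Byzantine nodes $B$ leaves exactly $G_{i,n-1}$ restricted to $C$, which by the previous paragraph has no $C_1$--$C_2$ edge and is thus disconnected (both $C_1$ and $C_2$ being nonempty). Hence $B$ is a vertex cut of $G_{i,n-1}$ of size at most $t$, so the computed connectivity satisfies $k = \connectivity(G_{i,n-1}) \le |B| \le t$. In either subcase the conjunction ``$k > t$ and $r = n$'' is false, so $i$ decides \partitionable; as $i$ is arbitrary, all correct nodes do.

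I expect the main obstacle to be the subcase $r = n$, which is exactly the situation where the Byzantine nodes relay faithfully enough that every node appears reachable, so that reachability alone gives no information and one must instead reason about connectivity. The crux there is the unforgeability argument of the second paragraph: it guarantees that the only edges able to bridge $C_1$ and $C_2$ in a correct node's computed graph are incident to a Byzantine node, which is precisely what makes $B$ a genuine cut. A minor point worth checking is that \reachable and \connectivity behave as the standard reachability and vertex-connectivity primitives on the computed matrix (in particular that a disconnected computed graph is assigned connectivity $0 \le t$), which also makes the $r \neq n$ subcase immediate.
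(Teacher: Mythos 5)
Your proof is correct and is essentially the contrapositive of the paper's own argument: the paper assumes a correct node decides \notPartitionable, deduces $t+1$ vertex-disjoint paths in its computed graph, and extracts a path that avoids the at most $t$ Byzantine nodes (hence, by unforgeability, a genuine path of correct nodes), contradicting disconnection, whereas you argue directly that unforgeability forces the Byzantine set to be a vertex cut of the computed graph, so the computed connectivity is at most $t$. The logical content is the same; your version merely makes explicit the unforgeability of correct--correct neighborhood proofs, a step the paper leaves implicit.
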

\begin{proof}
    Let us assume that the subgraph of correct nodes is disconnected and that a correct node decides \notPartitionable. It means that this correct node has computed a vertex connectivity of at least $t+1$. Thus, it means that it received signatures from every node through at least $t+1$ vertex-disjoint paths. Thus, there is a path between this node and every other correct node that is free of Byzantine nodes, which means that the subgraph of correct nodes is connected and contradicts our assumption.
\end{proof}


\begin{theorem}
    Algorithm~\ref{algo1} is a $t$-Byzantine-resilient network partition detection algorithm (defined in \editms{Def.~\ref{def:properties}}).
\end{theorem}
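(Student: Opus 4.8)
The plan is to derive the four properties of \cref{def:properties} (instantiated with $k_0=2t$) from the three lemmas already established, handling \textbf{Termination} separately since it concerns timing rather than decision values. For \textbf{Termination}, I would observe that the edge-propagation phase consists of exactly $n-1$ synchronous rounds, each completing within the bounded delay $\Delta T$ (local processing being negligible). After these rounds every correct node runs \reachable and \connectivity on a finite $n\times n$ adjacency matrix---both terminating local computations---and then calls \decide. Hence every correct node decides within a bounded amount of time.

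For the remaining three properties, the key observation is that they all reduce to a single dichotomy: whether the subgraph induced by the correct nodes $C = V\setminus B$ is connected. \textbf{Agreement} then follows by combining \cref{lemma:spec2} (connected case: all correct nodes compute the same final matrix $G_f$, hence the same $r$ and $k$, hence the same decision) with \cref{lemma:spec3} (disconnected case: all correct nodes decide \partitionable). These two cases are exhaustive, so in either case all correct nodes decide the same value. Note that agreement concerns the \decide output only; the \confirmed flag may legitimately differ across correct nodes in the disconnected case, as already discussed, and this does not affect the property.

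For \textbf{Safety}, I would argue that the hypothesis ``$V_b$ is a vertex cut of $G$'' is exactly equivalent to ``the correct-node subgraph $G[V\setminus V_b]$ is disconnected'', since the Byzantine set is $B = V_b$ with $|V_b|\le t$. Under this hypothesis \cref{lemma:spec3} applies directly, so all correct nodes decide \partitionable and no correct node ever decides \notPartitionable. For \textbf{$2t$-Sensitivity}, the hypothesis is that $G$'s connectivity is at least $2t = k_0$, which is precisely the premise of \cref{lemma:spec1}; that lemma yields that all correct nodes decide \notPartitionable.

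The bulk of the technical work lives in the lemmas rather than in this final assembly, so I expect no single genuinely hard step here. The only point requiring care is making the Safety reduction airtight---explicitly justifying that the formal vertex-cut condition on $V_b$ coincides with the disconnection of the correct-node subgraph invoked by \cref{lemma:spec3}---while keeping the distinction between agreement on the \decide value and the (possibly diverging) \confirmed output clear throughout.
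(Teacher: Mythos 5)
Your proposal is correct and follows essentially the same decomposition as the paper's own proof: termination from the synchrony hypothesis, agreement from \cref{lemma:spec2} and \cref{lemma:spec3} via the exhaustive connected/disconnected dichotomy on the correct-node subgraph, safety from \cref{lemma:spec3} (with the observation that $V_b$ being a vertex cut of $G$ is by definition the disconnection of the subgraph induced by $V \setminus V_b$), and $2t$-sensitivity from \cref{lemma:spec1}. The only difference is that the paper's proof additionally establishes the Validity property of the \confirmed output (arguing via \cref{lemma:spec2} and Eq.~\ref{eq:voisins} that $\confirmed = \true$ forces $r \neq n$ and hence a vertex cut); since Validity is stated outside \cref{def:properties}, your omission of it is consistent with the theorem as phrased.
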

\begin{proof}
    \editms{Alg.~\ref{algo1} satisfies the termination property due to the network synchrony hypothesis. Agreement is ensured by Lemma~\ref{lemma:spec2} and Lemma~\ref{lemma:spec3}. Safety is ensured by Lemma~\ref{lemma:spec3}. Finally, \solname is $2t$-sensitive, according to Lemma~\ref{lemma:spec1}. Regarding the validity property, if a node computes $\confirmed = \true$, it has in particular, computed $r \neq n$ (Alg.~\ref{algo1} (ll.~\ref{algo:decision}-\ref{algo:end_decision})).  There are thus two cases: first, if the subgraph of correct nodes is disconnected, then $V_b$ is directly a vertex cut of $G$. Second, if the subgraph of correct nodes is connected, then according to Lemma~\ref{lemma:spec2}, every correct node computes $r \neq n$. It thus means that, according to Eq~\ref{eq:voisins}, there exists $b_0 \in B$, such that for every correct node $c \in C$, $b_0$ is not in $\Gamma(c)$, i.e., $V_b$ is a vertex cut of $G$. }
\end{proof}

\subsection{Communication Complexity.} \label{sec:complexity}
\editms{According to the pseudo-code of Alg.~\ref{algo1}, we can compute the message complexity as follows.
Each node has to forward a unique message for each network edge, one for each of its neighbors. In the worst case (fully connected graph), the complexity is thus in $O(n^4)$. A key aspect is that the complexity highly depends on the network topology. The more edges the graph has, the higher the global communication cost is. The communication cost can also be very disparate through nodes since the complexity for each node depends on the size of its neighborhood. One can also note that the \mschange{\mesure cost} through the algorithm might vary a lot: the lower the graph's diameter is, the sooner the nodes will have exchanged their neighborhood information. It might thus happen that in the last rounds of the algorithm, every correct node stays silent because they have already discovered all the edges of the graph. }

\section{Evaluation}
\label{sec:xp_eval}

We compare \solname experimentally to two baselines on different families of network topologies in terms of \mschange{network cost} and resilience to Byzantine behaviors.

\subsection{Baselines}

To the best of our knowledge, there is no existing Byzantine tolerant partition detection algorithm in the literature \editms{that is compliant with any topologies}. We, however, compare \solname to MindTheGap (MtG)~\cite{bouget_mind_2018}, an efficient network partition detection solution. Processes in MtG flood a list of reachable nodes to each other. Nodes keep in memory a list of reachable nodes (that only contains themselves initially), and send regularly this list to their neighbors, during a fixed period of time (an \textit{epoch}). When receiving a list of neighbors, nodes can actualize their own list of reachable nodes. MtG has a low network consumption because it uses Bloom filters to represent a list of process IDs. 

MtG mechanisms are easily corruptible by a single Byzantine process (as explained in Sec.~\ref{sec:xp_byz}). We thus decided to also consider a strengthened version of MtG as a second baseline, where MtG's Bloom filters are replaced by a list of signed process IDs. 
To minimize the increased \mschange{network cost} associated to this modification, we made sure that nodes only send a given signed ID once to their neighbors per epoch. In the following, we call MtGv2 this variant of MtG.

\subsection{Experimental setup} 
\label{sec:xp_and_use_cases}

We implemented all protocols in \texttt{C++} and used the \texttt{salticidae} networking library\footnote{https://github.com/Determinant/salticidae}. We use the ECDSA signature scheme~\cite{johnson_1998_elliptic}. \mschange{We ran our experiments on a Dell PowerEdge R640 server (2x Intel(R) Xeon(R) Gold 6136 CPU @ 3.00GHz, 12 cores/processor, hyper-threading, 188GB of RAM).}
Our deployment code uses one \texttt{Docker} container per process~\cite{merkel_2014_docker}. 

\editms{To study the \mschange{network cost} of \solname, we consider two families of graph topology:
\begin{itemize}
    \item (i) Realistic connectivity-dependent topologies, such as those considered by Bonomi, Farina, and Tixeuil \cite{bonomi2019multi}, since connectivity is a key aspect of our approach.  
    \item (ii) Random graphs that model a simple drone network. 
\end{itemize}

\begin{figure}[tb]
    \centering
    \includegraphics[width = 0.87\columnwidth]{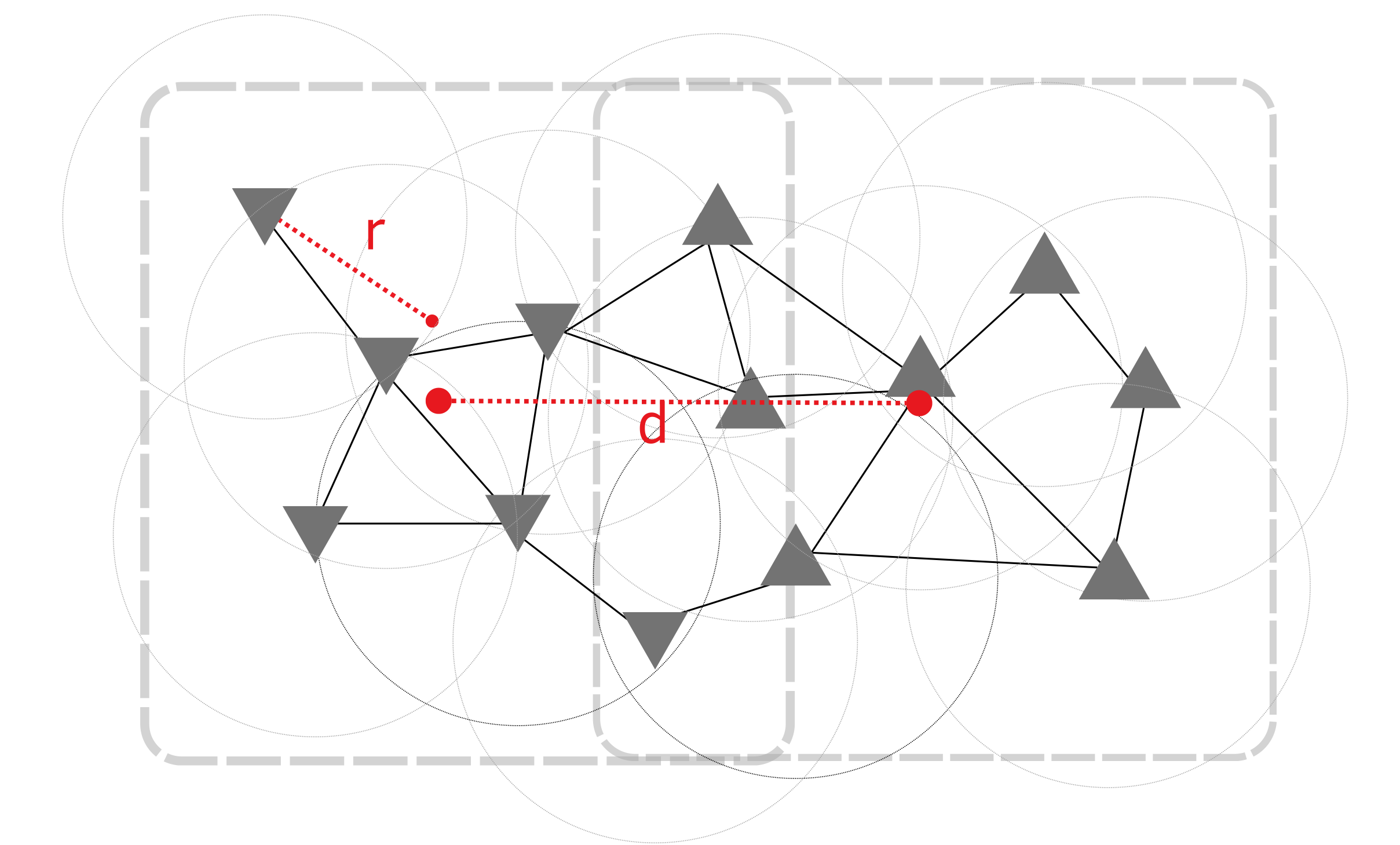}
    \caption{An example of our drone scenario with random graph. Two scatters of points are generated. $d$ is the distance between the barycenters of the scatters, and $r$ is the communication scope.}
    \label{fig:cloud_drone}
\end{figure}

Bonomi, Farina and Tixeuil \cite{bonomi2019multi} highlight several class of graphs, such that:
\begin{itemize}
    \item $k-$regular $k-$connected graphs \cite{steger_wormald_1999}. Regular graphs ensure that the graph's connectivity is exactly $k$ (with the minimum number of edges) and that each node has exactly $k$ neighbors. 
    \item $k-$pasted-tree and $k-$diamond graphs \cite{BALDONI20092110}. Those topologies are \textit{Logarithmic Harary Graph}, built to have interesting properties for fault-tolerance and suit message flooding communication protocols. 
    \item Generalized and Multipartite Wheel graphs \cite{bonomi2019multi}. Those topologies are the worst-case scenarios while considering Byzantine faults: Byzantine nodes might compose a clique while it might have only one (generalized wheel) or few (multipartite wheel) path(s) that link all correct nodes (and thus increase latency).   
\end{itemize}
}
For the drone scenario, we create random graphs by generating random nodes in a 2D space, and a scope parameter decides edges: if two nodes are close enough (i.e., their distance is lower than $radius$), then we add an edge between them.  Those nodes are randomly generated around two barycenters. In the following, we vary the distance between barycenters (noted $d$), the scope parameter ($radius$), and the number of nodes ($n$). \editms{Fig~\ref{fig:cloud_drone} is an example of our scenario, which aims to model a drone network, where two drone scatters are moving away or approaching in space.} For each situation, we run \editms{50} times the experimentation and report average results. \editms{Error intervals correspond to a confidence interval of 95\% we obtained during our experimentation. }


We finally selected a number of nodes to act as Byzantine nodes to study the Byzantine resilience of the tested algorithms. The tested Byzantine behaviors are discussed in Sec.~\ref{sec:xp_byz}.

\subsection{\ftchange{Network cost}}

\textbf{Testing impact of connectivity.}
\editms{We discuss in this section of the performance of \solname on the topologies highlighted by Bonomi and al.~\cite{bonomi2019multi}, while varying the connectivity parameter. }
Fig.~\ref{fig:Regular_fig} shows the \mschange{data sent} per node depending on the total number of nodes for several vertex-connectivity parameters $k$ \editms{for $k-$ regular $k$-connected graphs.} In the worst cases, ($n = 100$ and $k = 34$), the \mschange{\mesure} per node is around $500$\,KB, which is correct for nowadays technologies. Such a figure allows us to know the \mschange{network cost}, depending on the wanted robustness for several sizes of systems, up to 100 nodes. 
\newcommand{\expeFigScaling}{0.5}

\begin{figure}[tb]
    \centering
    \includegraphics[scale=\expeFigScaling]{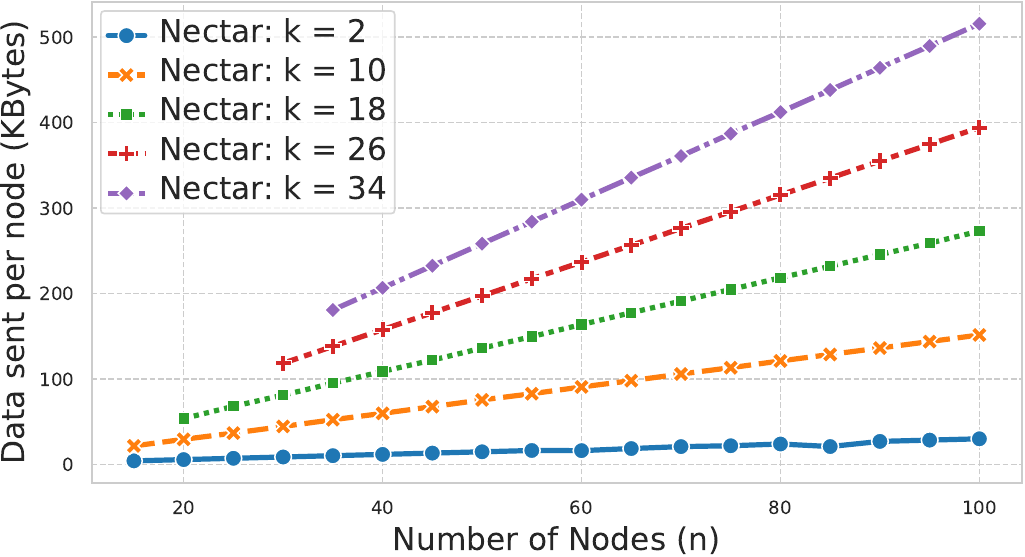}
    \caption{\ftchange{Data sent} per node (in KB), depending on the number of nodes ($n$), for different vertex-connectivity ($k$) in $k-$regular graphs, for \solname }
    \label{fig:Regular_fig}
\end{figure}

\editms{On the others topologies \solname exhibits similar behaviors but seems less costly. In settings identical to those of Fig.~\ref{fig:Regular_fig}, \solname is around 2 times less costly on $k-$diamond graphs and $k-$pasted graphs, and around 2.5 times less costly on multipartite wheel graphs and generalized wheel graphs.} \\

\textbf{Drone based scenario} 
Fig.~\ref{fig:random_BFT} shows the \mschange{network cost}  per node in the drone scenario, depending on the distance ($d$) between the two barycenters, with $n = 20$ nodes. For $d = 0$ and $radius = 2.4$, the graph corresponds to a fully connected graph, and the \mschange{\mesure} is around $50$\,KB. 
A distance $d = 6$ corresponds to a partitioned graph of two parts (it can be more for low values of $radius$). The red dotted curve corresponds to performances of MtG (that does not depend on $d$ and $radius$): its \mschange{amount of \mesure} is around $1.9$\,KB.
Fig.~\ref{fig:random_BMTG} corresponds to the same experiment for MTGv2. In the worst cases, the \mschange{amount of \mesure} is around $3$\,KB. \\

\begin{figure}[tb]
    \centering
    \includegraphics[scale=\expeFigScaling]{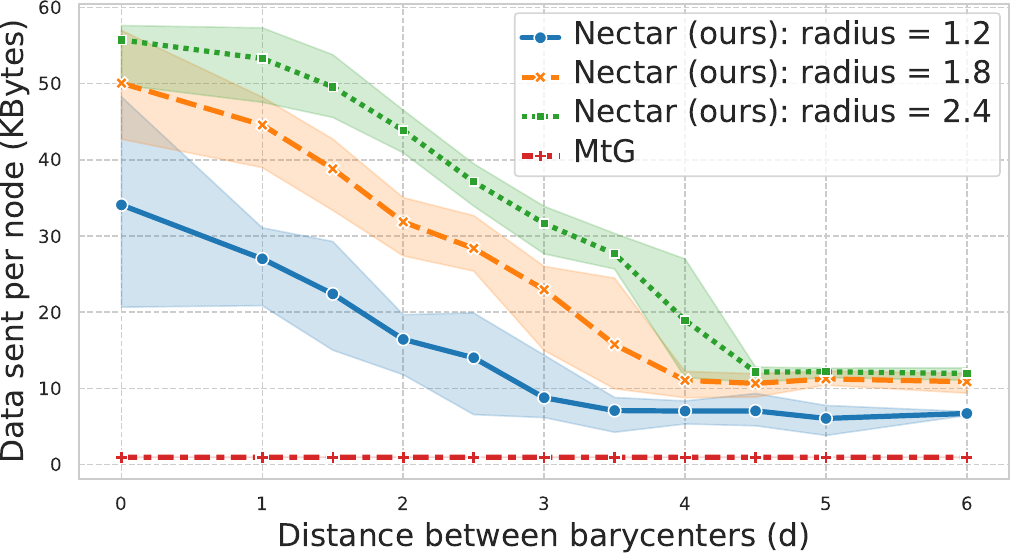}
    \caption{\ftchange{Data sent} per node (in KB), depending on the distance between barycenters ($d$), for different values of communication scope ($radius$), in the drone scenario for \solname. The red curve is MtG~\cite{bouget_mind_2018}, whose performance does not depend on $d$ nor $radius$.}
    \label{fig:random_BFT}
\end{figure}
\begin{figure}[tb]
    \centering
    \includegraphics[scale=\expeFigScaling]{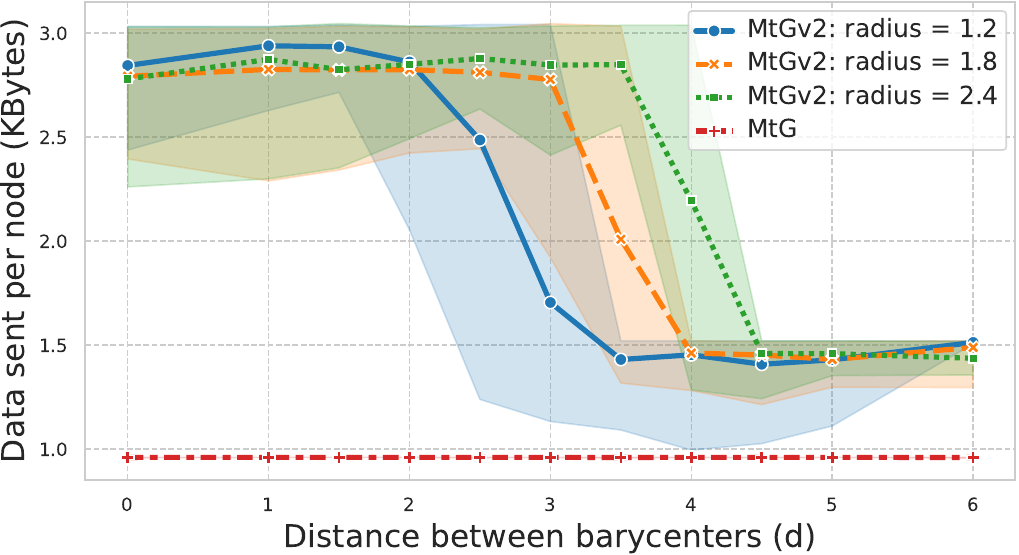}
    \caption{\mschange{Data sent} per node (in KB), depending on the distance between barycenters ($d$), for different values of communication scope ($radius$), in the drone scenario for MtGv2. }
    \label{fig:random_BMTG}
\end{figure}

\textbf{Number of nodes.}
The \mschange{network cost} per node in \solname increases in the worst cases quadratically with the number of nodes in the system and with the number of edges, as seen in Sec~\ref{sec:complexity}. To illustrate this effect, we now keep experimenting on regular and random graphs, but this time we vary the value of parameter $n$, for several values of $d$. The scope communication is fixed to $radius = 1.2$. 
Fig.~\ref{fig:random_n_BFT} shows the results for \solname. The maximum value observed is around $200$\,KB per node, for $n = 50$, with $d = 0$, corresponding to almost fully connected graphs. Fig.~\ref{fig:random_n_BMTG} shows the results for MTGv2. The maximum \mschange{amount of \mesure} per node observed is around $7.5$\,KB, for $n = 50$ and $d = 0$, also corresponding to almost fully connected graphs. \editms{Those values, both for \solname and MTGv2 algorithms, are very reasonable for nowadays technologies.} 

\begin{figure}[tb]
    \centering
    \includegraphics[scale=\expeFigScaling]{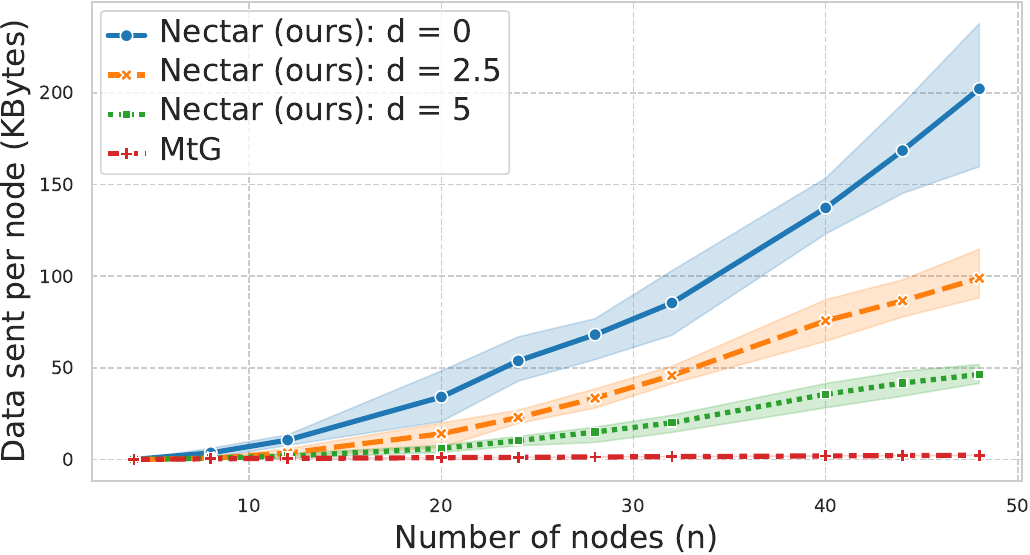}
    \caption{\mschange{\mesure } per node (in KB) depending on the number of nodes (n), for different values of distances (d) between barycenters, with a fixed communication scope (radius = 1.2), in the drone scenario.
    }
    \label{fig:random_n_BFT}
\end{figure}
\begin{figure}[tb]
    \centering
    \includegraphics[scale=\expeFigScaling]{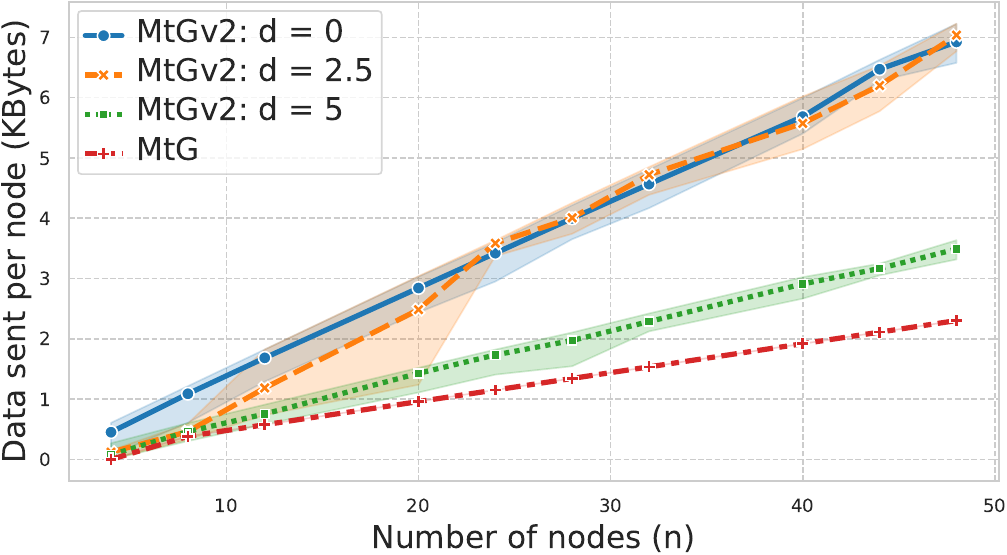}
    \caption{\mschange{\mesure } per node (in KB) depending on the number of nodes (n), for different values of distances (d) between barycenters, with a fixed communication scope (radius = 1.2), in in the drone scenario. The red line is the same as in Fig.~\ref{fig:random_n_BFT}.}
    \label{fig:random_n_BMTG}
\end{figure}

\subsection{Byzantine resilience} 
\label{sec:xp_byz}

\editms{\textbf{Drone based scenario.}} This section investigates the resilience of \solname, MtG, and MtGv2 to Byzantine behaviors.  \editms{We particularly study the robustness of MtG and MtGv2 protocols against few Byzantine nodes.} 
Due to the use of Bloom filters, MtG is easily corruptible by Byzantine nodes. For example, in a partitioned graph, Byzantine nodes can send filters full of $1$ values to lead correct nodes to conclude that the system is connected. We experiment the situation of a graph partitioned into two parts, and we take care of equally distributing the Byzantine nodes between the two parts. 
The red dotted curve of Fig.~\ref{fig:byz_succes} shows the proportion of correct nodes that correctly detect the partition. Such an experiment shows that two Byzantine nodes are enough to make all correct nodes reach the incorrect decision, while one Byzantine node is enough to prevent correct nodes from reaching the same decision (i.e., breaking the agreement property). 

This attack is not possible on \solname and MtGv2, due to the use of signatures (Byzantine nodes cannot forge signatures). Thus, we considered another attack, which is the same for the two tested algorithms.
We generated a subgraph of correct nodes that is partitioned into two parts. We then added Byzantine edges between each part, to make the graph connected, where all communications between the two correct parts must pass through Byzantine nodes, which also means that the graph is at most $t$-connected, where $t$ is the number of Byzantines nodes, and the Byzantine nodes are the $t$ key nodes that decide the connectivity parameter. The Byzantine behavior we considered for this kind of situation is that Byzantine nodes act correctly toward one part of the subgraph of correct nodes, and as crashed nodes for the other part. Fig.~\ref{fig:byz_succes} shows our results for these experiments. For \solname, as the connectivity will never be above $t$, all correct nodes conclude to a \partitionable decision, which is the correct decision since the subgraph of correct nodes is disconnected. For MtGv2, such an attack makes around half of the correct nodes conclude that the graph is connected (which is true), while the other half conclude that the graph is partitioned. Once again, one Byzantine node is enough to lead correct nodes to different decisions.  \\

\editms{\textbf{Connectivity-dependent topologies.} We investigated the same attacks on the topologies highlighted by Bonomi et al. \cite{bonomi2019multi}, with an aleatory placement of Byzantine nodes, and observed the following behaviors. For all topologies, MtG drops to 0 success rate of correct decision as soon as there are 2 Byzantine nodes, while \solname keeps a success rate of 1. For MtGv2, results depend on topologies: For $k-$diamond graphs, MtGv2 keeps a success rate close to 1 (with a confidence interval of [0.95, 1], no matter the number of Byzantine nodes. For $k-$regular graphs,  $k-$pasted graphs, Generalised Wheel graphs, and Multipartite Wheel graphs, MtGv2 drops to 0.3 success rate on average, with a confidence interval of [0, 1]. }

\editms{To conclude, this experimental evaluation shows that existing partition detection algorithms are easily corruptible by a few Byzantine nodes using simple attacks such as sharing incorrect information or omitting to send some messages, although strengthening solutions by using signatures complicates attacks. \solname, however, provides a solution that ensures Byzantine resilience, no matter the network's topology. Although this solution is more costly than state-of-the-art ones, this cost is lower than $500$\,KB per node per algorithm execution, up to 100 nodes, which is a \mschange{reasonable} \mschange{network cost} for most of nowadays networks.  }

\begin{figure}[tb]
    \centering
    \includegraphics[scale=\expeFigScaling]{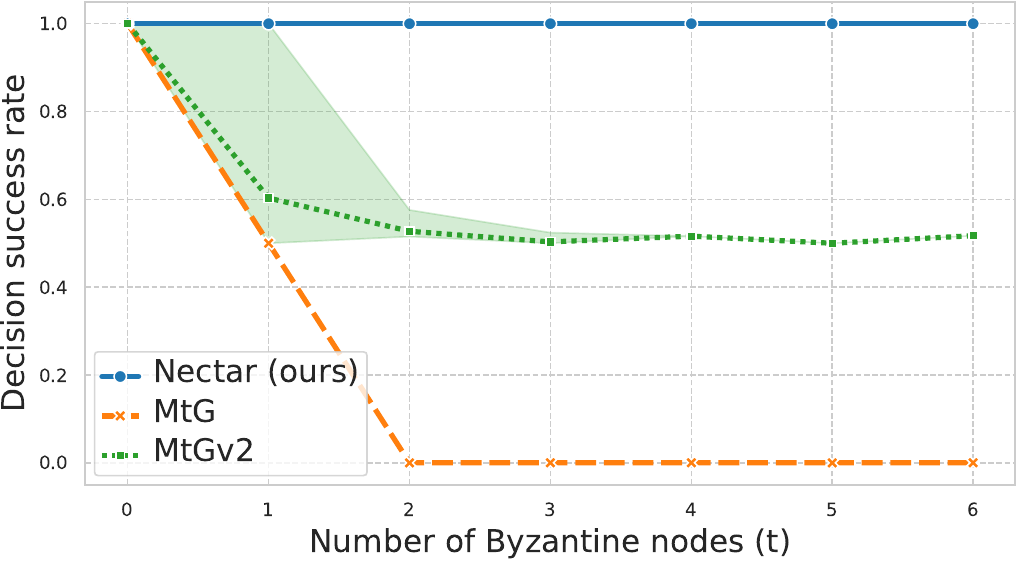}
    \caption{Success rate of correct decision (through correct nodes), depending on the number of Byzantine nodes, for the drone scenario with 35 nodes. Results with  20 and 50 nodes exhibit the same tendencies.}
    \label{fig:byz_succes} 
\end{figure}

\section{Related Work}
\label{sec:sota}

\subsection{Partition detection}
\label{netpartdetc}

 Several works have studied how to detect network partition in fault-free and in crash-prone networks, in particular in mobile ad hoc networks (MANETs) and Wireless Sensor Networks (WSNs). However, to the best of our knowledge, only \cite{augustine2022byzantine} considered Byzantine faults. 

\subsubsection{In fault-free networks}

    Ritter et al.~\cite{ritter_partition_2004} proposed a heuristic to detect partitions in reliable MANETs. Their approach is based on the hypothesis that nodes at the border of the graph do not change frequently and are reliably detectable. Their idea is to exchange beacon messages through the network. A partition is suspected when beacons are not received for sufficiently long between well-chosen nodes.  

    Bouget et al.~\cite{bouget_mind_2018} propose MindTheGap (MtG), a light and fully decentralized approach dedicated to MANETs. MtG assumes that all nodes are correct but tolerates unreliable communication channels and detects partitions in dynamic networks. Every node in MtG gossips what it knows of the system using Bloom filters. Simulations revealed that MtG detects 90\% of partitions despite a 40\% message loss rate.  \\
  
\subsubsection{In crash-tolerant networks}

Renesse et al.~\cite{renesse_gossip-style_1998} propose a crash failure detection service based on gossip that can be used to detect partitions in asynchronous networks. Due to asynchrony, it is hard to know if a process (a node) has crashed or is just very slow. This is why, in such a context, allowing some false positive detections (assuming a slow process can be crashed) is reasonable while respecting acceptable accuracy.  This failure detection algorithm is based on probabilistic properties and comes with guarantees about low rates of false detections, message-loss resilience, known probability of mistakes, and scalability.  The authors turn their failure detection algorithm into a partition detection algorithm by detecting a group of unreachable nodes. 
            
Conan et al.~\cite{conan_failure_2008} propose an approach of partition detection based on heartbeat vector and propagation of reachability information for asynchronous distributed systems. Reachability information is obtained by propagating and aggregating path information for every node. Their method tolerates node mobility by frequently rebuilding the topology of the reachable system. An effort is made to distinguish disconnections from crashes. It is why, while detecting an unreachable node, computations are made to evaluate if the node is potentially just unreachable or crashed by paying attention to previous cases of potential disconnection of this node.
            
Those partition detection approaches have been designed for correct and crash-tolerant networks, and would not perform adequately under the Byzantine fault model that we consider. 

\editms{Augustine et al.~\cite{augustine2022byzantine} address the problem of detecting if a graph is connected, in the context of Byzantine networks. While focusing on the very specific case of congested cliques, they provide an algorithm that detects if the subgraph of correct nodes is connected or if it is \textit{far from connected}, i.e., if it has at least $2t + 1$ connected component. The algorithm is executed in $O(polylog(n))$ time.}
    
\subsection{Reliable communication primitives}


Dolev investigated the question of reliable communication in partially connected networks in the presence of Byzantine nodes~\cite{dolev_unanimity_1981}. He proved that agreement is possible if and only if $t < k/2$ and $t < n/3$ where $t$ is the number of faulty nodes, $k$ is the connectivity of the graph, and $n$ is the number of nodes in the graph. While the $1/3$ of Byzantine nodes limit was well-known for fully connected networks~\cite{lamport_byzantine_1982, bracha_asynchronous_1987}, this work identified the partial connectivity requirement for the first time.  
Dolev also introduced an algorithm that provides reliable communication, for $(2t + 1)$-connected networks. The main idea of this algorithm is that traveling messages contain information about the path they followed in the network. By looking for that information, the nodes are able to compute the number of disjoint paths a message has traveled through. Nodes deliver a message when they are able to deduce that the message has traveled through $(t+1)$ disjoint paths. Two variants of this protocol are proposed, to deal with both known and unknown topologies. The message complexity of this protocol is high and, in the worst case, equal to $\mathcal{O}(n!)$ where $n$ is the number of nodes.

This reliable communication protocol combined with Bracha's reliable broadcast algorithm~\cite{bracha_asynchronous_1987} (introduced for fully connected networks) provides a reliable broadcast protocol for partially connected networks. However, since those two protocols are costly, some optimizations have been proposed in the state-of-the-art. Bonomi et al.~\cite{bonomi_practical_2021} optimized the combination of Bracha's and Dolev's protocols through protocol-specific and cross-layer optimizations. Simulations showed that such optimizations made the protocols more practical by reducing its latency and the amount of transmitted information. 
            
A new broadcast variant has been proposed by Khan et al.~\cite{khan_exact_2019} that suits well partially connected networks: \textit{local broadcast}. In the local broadcast context, nodes cannot send different values to their neighbors, even faulty nodes. This context makes sense in wireless communication, among others. Such a new primitive reduces the possible faulty behaviors of Byzantine nodes. Therefore, when local broadcast is possible, Khan et al. have shown that consensus primitives built on top of a local broadcast primitive have less restrictive limits than those built on point-to-point channels. Those limits are: $\lfloor 3t /2 \rfloor < k  $ and $2t < d $, where $k$ is the graph connectivity, $t$ is the number of Byzantine nodes in the system, and $d$ is the minimum node degree. Note that the constraint about graph connectivity is less restrictive. On the other hand, the restricted number of acceptable Byzantine nodes has been replaced by a condition about nodes degree.
Recent works provided real-time guarantees for reliable broadcast in fully-connected networks with probabilistic message losses~\cite{kozhaya2018rt, kozhaya2021pistis}. We consider it future work to extend those guarantees to the partially connected networks we consider in this paper. 
            
\subsection{Vertex-connectivity in distributed systems}    

DECK is an algorithm that computes a network's connectivity~\cite{akram_deck_2018}. It is fully decentralized and designed for asynchronous WSNets. However, it assumes that all nodes are correct when we consider Byzantine faults. Tucci Piergiovanni and Baldoni~\cite{tucci_2007_connectivity} proposed an adaption of the definition of vertex-connectivity for eventually quiescent dynamic distributed systems. Their work focuses, in particular, on the definition of strong connectivity in such systems, for which they propose an algorithm that computes a tree as an overlay topology that guarantees eventual connectivity.

\mschange{\subsection{Unknown systems and Detectors }

Distributed unknown systems assume that nodes know neither the number of nodes in the systems nor their identity. 
Cavin et al. \cite{cavin_consensus_2004} first proposed a consensus algorithm \ftchange{for this type of systems} based on a \textit{Participant detector} oracle. They however suppose a reliable (and asynchronous) network. 
In unknown networks, Greve et al. \cite{greve_knowledge_2007} showed that consensus \ftchange{can} be \ftchange{provided} only if a node can establish (with the help of a \textit{participant detector} oracle)  that the network has a connectivity of at least $t+1$, where $t$ is the maximal number of crashes.
In Byzantine networks, Alchieri et al. \cite{alchieri_knowledge_2018} showed that Byzantine consensus can be achieved in unknown networks under the assumptions of synchronous communication, with sufficient connectivity in the network (connectivity of at least $2t + 1$, where $t$ is the number of faulty nodes). 
\ftchange{In these last works,} both crash and Byzantine cases rely on the \textit{participant detector} oracles (or variants) defined by Cavin et al \cite{cavin_consensus_2004}.

\ftchange{In} dynamic graphs, several works \cite{kihlstrom_byzantine_2003, awerbuch_-demand_2002, greve_time-free_2012} \ftchange{have} proposed a Byzantine failure detector. Byzantine failure detectors can be seen as oracles that detect when a Byzantine node deviates from its correct \ftchange{behavior}. For our problem, we consider that Byzantine detectors are out of the scope since Byzantine nodes can attack with a behavior undifferentiable than correct nodes. }

\section{Conclusion}
\label{sec:conclusion}

In this paper, we investigated the question of partition detection in Byzantine networks. We highlight the necessity to reconsider the notion of partition for those cases of networks and propose a new \textit{t-Byzantine-partitionable} notion. We characterize \textit{t-Byzantine-partitionable} networks, \editms{ relate this notion to the graph vertex-connectivity property,} and propose a new algorithm \solname that detects such network property. \mschange{\solname assumes synchronous communication and relies on signatures. We speculate that the problem becomes \ftchange{impossible to solve} in an asynchronous environment, \ftchange{as in this case arbitrarily late} messages \ftchange{become} indistinguishable from \ftchange{the effect of a} partition. Nevertheless, we posit that it can be accomplished without signatures \ftchange{in synchronous networks}, albeit at a significant cost.
} We formally prove \mschange{\solname} correctness and propose an in-depth experimental evaluation. Our experimental evaluation aims to compare \solname to \editms{a non Byzantine-tolerant performant baseline and its signature-based variant, for several realistic families of topologies.} 
Our results show that the \solname algorithm maintains a 100\% accuracy of the tested scenarios while the accuracy of the best competitor baselines decreases by at least 40\% as soon as one participant is Byzantine. Detecting {t-Byzantine-partitionability} in networks implies a higher \mschange{network cost} \editms{($O(n^4)$ message complexity in the worst cases, i.e., in fully connected graphs)} than efficient \editms{ state-of-the-art} partition detection algorithms. \solname's \mschange{network cost} increases with the number of nodes and decreases with the diameter of the networks, but it always remains lower than around 500\,KB per node for up to 100 nodes. 

\mschange{\section*{Acknowledgments}

    This work was partially supported by the French ANR project ByBloS (ANR-20-CE25-0002-01) devoted to the modular design of building blocks for large-scale Byzantine-tolerant applications, by the EU Horizon Europe Research and Innovation Programme under Grant No. 101073920 (TENSOR) and by the Ecole Normale Supérieure (ENS) of Rennes.}

\printbibliography

\end{document}